\theoremstyle{plain}
\newtheorem{theorem}{Theorem}
\newtheorem{lemma}[theorem]{Lemma}
\theoremstyle{definition}
\theoremstyle{remark}
\renewcommand{\leq}{\leqslant}
\renewcommand{\ge}{\geqslant}
\newcommand{\A}{\mathcal{A}}
\newcommand{\Z}{\mathbb{Z}}
\newcommand{\Zx}{\mathbb{Z}^{\times}}
\newcommand{\setup}{\texttt{Setup}}
\newcommand{\extract}{\texttt{Extract}}
\newcommand{\key}{\texttt{KeyExchange}}
\newcommand{\mpk}{\mathsf{mpk}}
\newcommand{\sk}{\mathsf{sk}}
\newcommand{\msk}{\mathsf{msk}}
\newcommand{\skid}{\mathsf{sk}_{\id}}
\newcommand{\skidp}[1]{\mathsf{sk}_{\id_{#1}}} 
\newcommand{\id}{\mathsf{id}}
\newcommand{\ssk}{\mathsf{ssk}}
\newcommand{\hash}{\mathcal{H}}
\newcommand{\QR}{{\textsf{{{QR}}}}}
\newcommand{\J}{\textsf{{J}}}
\newcommand{\legendre}[2]{\left(\dfrac{#1}{#2}\right)}
\newcommand{\Tun}{\mu^{\alpha_1}R_1}
\newcommand{\Tde}{\mu^{\alpha_2}R_2}
\title{Cryptanalysis of an Identity-Based Authenticated Key Exchange Protocol}
\author{Younes Hatri}\thanks{\texttt{hatri.younes@hotmail.fr}. Universit\'e des Sciences et de la Technologie Houari Boumediene, Bab Ezzouar 16111,  Algeria.}
\author{Ayoub Otmani} 
\thanks{\texttt{ayoub.otmani@univ-rouen.fr}. Normandie Univ, France; UR, LITIS, F-76821 Mont-Saint-Aignan, France.}
\author{Kenza Guenda}
\thanks{\texttt{ken.guenda@gmail.fr}. Universit\'e des Sciences et de la Technologie Houari Boumediene, Bab Ezzouar 16111,  Algeria.}
\begin{document}

\maketitle

\begin{abstract}
Authenticated Key Exchange (AKE) protocols represent an important cryptographic mechanism that enables several parties to communicate securely over an open network. Elashry, Mu and Susilo proposed an Identity Based Authenticated Key Exchange (IBAKE) protocol where different parties establish secure communication by means of their public identities.The authors also introduced a new security notion for IBAKE protocols called resiliency, that is, if the secret shared key is compromised, the entities can generate another shared secret key without establishing a new session between them. They then claimed that their IBAKE protocol satisfies this security notion.

We analyze the security of their protocol and prove that it has a major security flaw which renders it insecure against an impersonation attack.  We also disprove the resiliency property of their scheme by proposing an attack where an adversary can compute any share secret key if just one secret bit is leaked.
\end{abstract}


\section{Introduction}

Key agreement protocols permit different parties
to share  a common secret key which in turn can be used for 
different cryptographic goals like communication encryption, data
integrity, \textit{etc}.
The first practical solution to the problem of key-distribution  
is the famous Diffie-Hellman protocol \cite{DH76}.
However, it does not prevent from Man-In-The-Middle attacks because it
does not authenticate the involved parties.
A key agreement protocol provides key authentication
if each entity involved in the exchange is assured that no other entity
can learn the shared secret key. 
There exist several methods to broadcast authenticated keys. Classically it requires
public-key certificates with public key infrastructures. Another very interesting approach is to use
public data like identities  to generate authenticated keys. 

The idea of using identities in cryptography dates back to Shamir's paper \cite{S85} 
where he asks how to achieve a public key encryption 
scheme that allows to compute public keys from arbitrary strings like user's identity 
(an email, phone number, \emph{etc}). 
Consequently, electronic certificates are no more required and more importantly
it eliminates the need for large-scale public key infrastructure.
Although Shamir introduced in \cite{S85} the concept of Identity-Based Encryption (IBE), 
he was not able to propose one.
The construction remained an open problem until
Boneh and Franklin \cite{BF01} and Cocks \cite{C01} proposed IBE schemes in 2001. 
The Boneh-Franklin scheme \cite{BF01} makes use of bilinear maps which then sparked
a lot of works \cite{BX04,W05}. Recently,
lattices have also been used in the design of IBE schemes \cite{GPV08} which
gave rise to a large number of schemes.

Cocks builds in \cite{C01} an  IBE scheme based on the
quadratic residuosity  problem modulo an RSA integer.
It is time-efficient compared with pairing-based IBE systems, but unfortunately
ciphertexts are very long. Boneh, Gentry and Hamburg (BGH) solved
the problem of Cocks' scheme by presenting a space-efficient scheme without pairings 
but at the cost of a less time-efficient scheme \cite{BGH07}.

The concept of IBE was extended to authenticated key exchange (AKE) protocols.
Smart \cite{S01} presented a two-pass Identity-Based AKE (IBAKE) using Weil pairings and
merging the ideas of Boneh and Franklin \cite{BF01} with tripartite Diffie-Hellman (DH) protocol of Joux \cite{J04}.
This work was then followed by several works.
Recently, Elashry, Mu and Susilo \cite{EMS15} proposed another IBAKE protocol
and introduced a new security notion called \emph{resiliency}. 
A key exchange protocol is said to be \emph{resilient}
when parties are able to generate new shared secret keys 
without establishing a new session between them, even if 
a secret shared key has been compromised. 
The IBAKE protocol proposed by \cite{EMS15} builds upon  the IBE encryption scheme of \cite{BGH07} 
and it is claimed in \cite{EMS15} to be resilient.

\subsection{Our contribution.}

In this paper, we analyze the security of Elashry, Mu and Sussilo (EMS) protocol \cite{EMS15} and
prove that it has a major security flaw
which renders it insecure against an impersonation attack.  We are indeed able
to prove that the protocol is insecure against a very simple man-in-the-middle attack.

\medskip

We also disprove the resiliency property of the EMS scheme by proposing
an attack where an adversary can compute in time quartic in the security parameter 
the secret shared key from the knowledge of a single secret bit. Our method is similar to the one given 
in \cite{TITN16} to attack an IBE encryption scheme proposed in \cite{JB09}.

\medskip

The rest of this paper is organized as follows. In Section~\ref{sec:prelim}, 
we recall the definition and notion for IBAKE protocols.
In Section~\ref{sec:EMS}, we present Elashry, Mu, Sussilo (EMS) IBAKE protocol \cite{EMS15}.
In Section~\ref{sec:attack}, we describe our attacks against this protocol. 
In Section~\ref{sec:discussion}, we discuss the question of repairing the scheme.
Finally, in Section~\ref{sec:conclusion}  we conclude the paper.

\section{Preliminaries} \label{sec:prelim}
\subsection{IBAKE Protocol}

We shall assume that a trusted authority is responsible for the creation and
distribution of users' private keys. An Identity-Based Authenticated Key Exchange protocol (IBAKE) \cite{S01,MB05} is
defined by three algorithms: $\setup()$, $\extract()$ and $\key()$.

\begin{enumerate}
  \item $(\msk,\mpk) \leftarrow \setup(\lambda)$. The authority takes as input a security parameter $\lambda$ 
and generates public parameters that are denoted by $\mpk$ and a \emph{master secret key} $\msk$.

  \item $\skid \leftarrow \extract(\msk,\id)$. Given an identity $\id$, the authority uses his master key
$\msk$ to generate the private key $\skid$ corresponding to $\id$.

  \item $\ssk \leftarrow \key(\id_1,\id_2).$ Two parties $P_1$ and $P_2$ with system parameters
$(\id_1,\skidp{1})$ and $(\id_2,\skidp{2})$ respectively generate a
shared secret key $\ssk$.
\end{enumerate}

\subsection{Quadratic Residues and Jacobi Symbol}
For any integer $N \ge 2$ we denote by $\Zx_N$ the multiplicative group of integers modulo $N$.
Let $y \in \Zx_N$ then we say that $y$ is a \emph{quadratic residue} in
$\Zx_N$ if there exists $x \in \Zx_N$ such that:
\[
 y\equiv x^2 \mod N.
\]
The set of quadratic residues in $\Zx_N$ is denoted by $\QR(N)$:
\[
\QR(N) = \left \{  y \in \Zx_N ~:~ \exists x \in \Zx_N, ~ y = x^2 \mod N \right \}.
\]
Let $p$ be an odd prime number,
we define the Legendre symbol  of $x \in \Z$ with respect to $p$ as
\begin{equation*}
\legendre{x}{p}
=
x^\frac{p-1}{2} \mod p.
\end{equation*}
We recall that $\legendre{x}{p}$ belongs to $\{-1,0,1\}$ and enables to determine if $x$
is a quadratic residue since we have:
\[
\legendre{x}{p} = \left \{ \begin{array}{rcl}
1 & \text{ if}  & x \in \QR(N) \\
-1 & \text{ if}  & x \notin \QR(N) \text{ and } x \neq 0 \mod p\\
0 & \text{ if}  & x = 0 \mod p. \\
\end{array}
\right.
\]
The Legendre symbol is extended to any
odd positive integer $N = p_1^{\alpha_1}\cdots p_k^{\alpha_k}$  where $p_1,\dots{},p_k$
are pairwise different prime numbers
and $\alpha_1,\dots,\alpha_k$ are positive integers.
This generalization is called the Jacobi symbol and is defined as:
\begin{equation*}
\left(\dfrac{x}{N}\right)
=
\left(\dfrac{x}{p_1}\right)^{\alpha_1} \cdots \left(\dfrac{x}{p_k}\right)^{\alpha_k}.
\end{equation*}
The subset of $\Z_N$ with symbol to equal $1$ is denoted by $\J(N)$. Note that $\QR(N)$ is a subset of $\J(N)$.

The \emph{quadratic residuosity assumption}  states that, for any integer $N =  pq$,
where $p$ and $q$ are different prime numbers that are picked at random,
there exists no probabilistic polynomial time  algorithm that is able to distinguish between the distribution
of samples drawn from $\QR(N)$ and the distribution
of samples picked from $\J(N) \setminus \QR(N)$ (see \cite{BGH07} for more details).

\subsection{Solving $R x^2 + S y^2 = 1 \mod N$.} \label{sec:SolvingQuad}

Assuming that $N = pq$ where $p$ and $q$ are different prime numbers, Boneh, Gentry and Hamburg presented in \cite{BGH07}
an efficient algorithm to solve in $\Z_N$ an equation of the form:
\begin{eqnarray} \label{eq:BGH}
R x^2 + S y^2 = 1 \mod N
\end{eqnarray}
where $R$ and $S$ are in $\Z_N$.
They considered the following ternary quadratic form over $\Z$:
\begin{eqnarray} \label{eq:ternary}
\widetilde{R} x^2 + \widetilde{S} y^2 - z^2 = 0.
\end{eqnarray}
with $\widetilde{R}$, $\widetilde{S}$ in $\Z$. A classical result of Legendre \cite{BGH07} says 
that \eqref{eq:ternary} has a solution $(x,y,z) \in \Z^3$ if 
there exist $\widetilde{r}$ and $\widetilde{s}$ in $\Z$ such that
\begin{eqnarray} \label{eq:rs}
\widetilde{R} = 
 \widetilde{r}^2 
\;\mod \widetilde{S}
\;\;\;\;\text{ and }\;\;\;\;
\widetilde{S} 
=
\widetilde{s}^2
\;\mod \;\widetilde{R}.
\end{eqnarray}
Cremona and Rusin proposed  in \cite{CR03} an algorithm using lattice reduction to solve \eqref{eq:ternary}
assuming that \eqref{eq:rs} holds. Furthermore, if  $\widetilde{R} = R \mod N$ and $\widetilde{S} = S \mod N$ 
then a solution to \eqref{eq:ternary} also gives a solution to \eqref{eq:BGH}.
Consequently, solving \eqref{eq:BGH} consists in finding prime numbers $\widetilde{R}$, $\widetilde{S}$ and integers 
$\widetilde{r}$, $\widetilde{s}$ 
such that
$\widetilde{R} = R \mod N$, $\widetilde{S} = S \mod N$ and $\widetilde{r}$, $\widetilde{s}$ 
satisfy \eqref{eq:rs}. There exist several possible candidates $(\widetilde{R},\widetilde{S})$ from a given couple $(R,S)$ 
but Boneh and Franklin proposed a \emph{deterministic} polynomial-time algorithm that finds a specific $(\widetilde{R},\widetilde{S})$ 
which leads to a solution to \eqref{eq:BGH}.
For more details we refer the reader to \cite{BGH07}. 

Finally we state an important lemma that shows an important property  used in \cite{BGH07} and \cite{EMS15}.

\begin{lemma} \label{lem:Jacobi_xy}
Assume that $R$ and $S$ belong to $\QR(N)$ and let $(x,y)$ be a solution
to \eqref{eq:BGH}. Then we have the following equality:
\[
\left(\frac{1 + x \sqrt{R}}{N}\right)
=
\left(\frac{2 + 2y \sqrt{S}}{N}\right).
\]
\end{lemma}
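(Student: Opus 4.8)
The plan is to reduce the entire statement to a single algebraic identity in $\Z_N$. Since $R$ and $S$ lie in $\QR(N)$, fix square roots $\sqrt{R}, \sqrt{S} \in \Zx_N$; then $1 + x\sqrt{R}$ and $2 + 2y\sqrt{S} = 2(1 + y\sqrt{S})$ are honest elements of $\Z_N$, so both Jacobi symbols in the statement are well defined. (One checks separately, using $(1+x\sqrt{R})(1-x\sqrt{R}) = 1 - Rx^2 = Sy^2 = (y\sqrt{S})^2$, that the value of the symbol does not depend on which square root is chosen, so the lemma is unambiguous.) Writing $u = x\sqrt{R}$ and $v = y\sqrt{S}$, the constraint \eqref{eq:BGH} becomes simply $u^2 + v^2 = Rx^2 + Sy^2 = 1 \mod N$, and this is the only way the hypothesis will enter.

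The key step is the identity
\[
(1 + u + v)^2 = 2\,(1+u)(1+v) \mod N.
\]
I would establish it by direct expansion: $(1+u+v)^2 = 1 + u^2 + v^2 + 2u + 2v + 2uv$, and substituting $u^2 + v^2 = 1$ collapses this to $2 + 2u + 2v + 2uv = 2(1+u)(1+v)$. Translating back, this reads $\bigl(1 + x\sqrt{R} + y\sqrt{S}\bigr)^2 = 2\,(1 + x\sqrt{R})(1 + y\sqrt{S})$, exhibiting $2\,(1+x\sqrt{R})(1+y\sqrt{S})$ as a perfect square modulo $N$.

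To conclude, I would apply the multiplicativity of the Jacobi symbol together with the fact that a square that is coprime to $N$ has Jacobi symbol $1$. Thus
\[
\left(\frac{2}{N}\right)\left(\frac{1+x\sqrt{R}}{N}\right)\left(\frac{1+y\sqrt{S}}{N}\right)
=
\left(\frac{\bigl(1+x\sqrt{R}+y\sqrt{S}\bigr)^2}{N}\right)
= 1 .
\]
Since each factor lies in $\{-1,1\}$, multiplying through by $\left(\frac{1+y\sqrt{S}}{N}\right)$ and using $2 + 2y\sqrt{S} = 2(1+y\sqrt{S})$ yields exactly $\left(\frac{1+x\sqrt{R}}{N}\right) = \left(\frac{2+2y\sqrt{S}}{N}\right)$.

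The routine calculations (the expansion and the symbol manipulation) are immediate; the only real obstacle is spotting the identity $(1+u+v)^2 = 2(1+u)(1+v)$, which is precisely what makes the constraint $u^2+v^2=1$ do all the work. The remaining care is bookkeeping: one must ensure the relevant elements are coprime to $N$ so that the symbols are nonzero and behave as genuine characters into $\{-1,1\}$; this is where the assumption $R,S \in \QR(N)$ and the invertibility of $x,y,\sqrt{R},\sqrt{S}$ guarantee everything stays in $\Zx_N$.
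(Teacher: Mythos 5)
Your proof is correct and follows essentially the same route as the paper: the central identity $\bigl(1+x\sqrt{R}+y\sqrt{S}\bigr)^2 = \bigl(1+x\sqrt{R}\bigr)\bigl(2+2y\sqrt{S}\bigr) \bmod N$ is exactly the computation in the paper's proof, with the conclusion drawn from multiplicativity of the Jacobi symbol. The extra remarks on well-definedness and coprimality are sensible bookkeeping that the paper leaves implicit.
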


\begin{proof} We have in $\Z_N$ the following equality:
\begin{align*}
\left(x \sqrt{R} + 1 \right)
\left(2y\sqrt{S} +2 \right)
& = 2 x y \sqrt{R S} + 2 x \sqrt{R} + 2 y \sqrt{S} + 2 \\
& =  \left( x \sqrt{R} + y \sqrt{S}  + 1 \right)^2.
\end{align*}
The last equality is obtained by using \eqref{eq:BGH}.
\end{proof}

\section{Elashry-Mu-Susilo (EMS) IBAKE Scheme} \label{sec:EMS}

\begin{figure}
\begin{center}
\includegraphics{./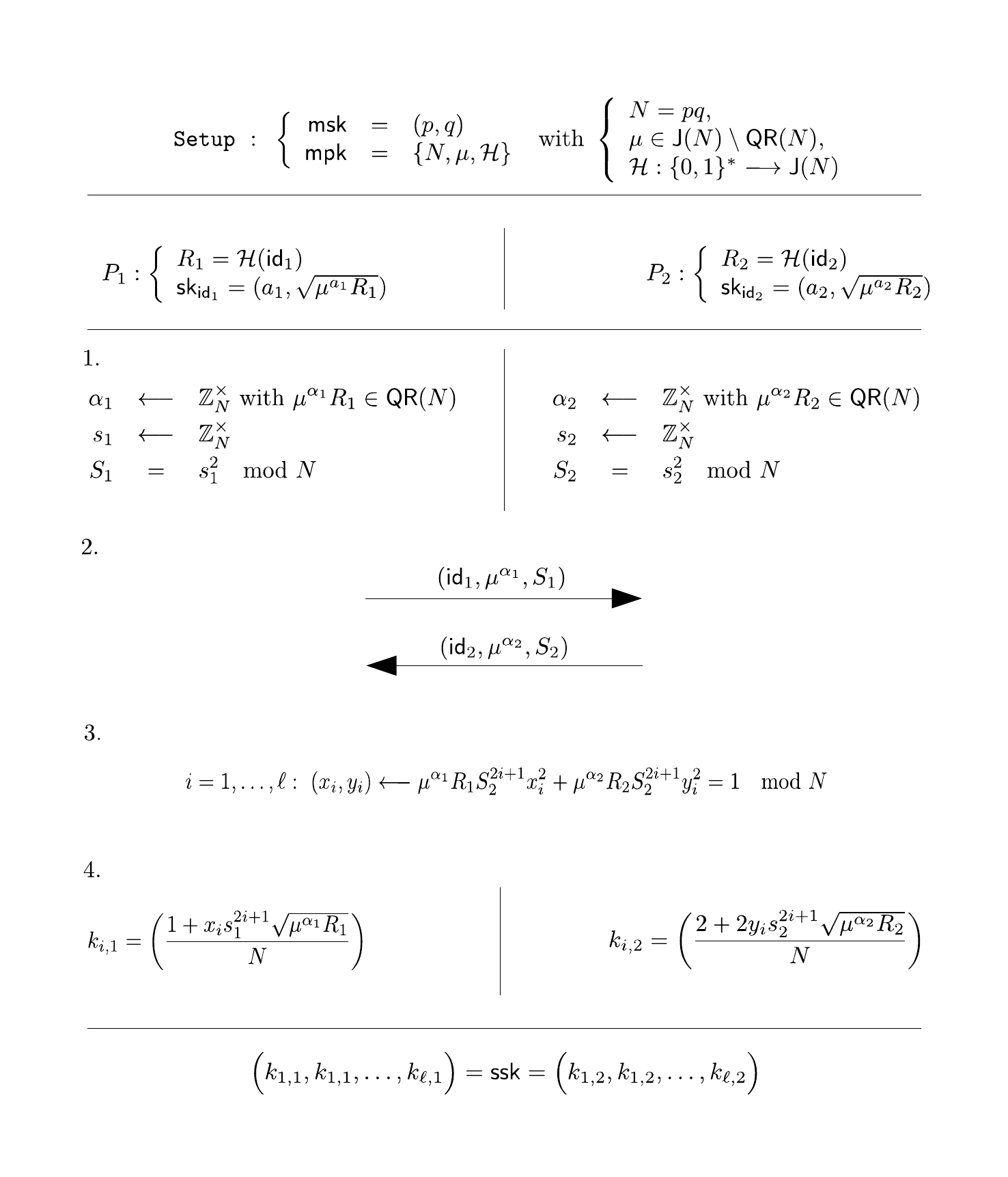}
\end{center}
\caption{Elashry-Mu-Susilo (EMS) IBAKE protocol.} \label{fig:ems}
\end{figure}

EMS scheme \cite{EMS15} is specified by the following algorithms (Fig. \ref{fig:ems}):

\begin{enumerate}
  \item $(\msk,\mpk) \leftarrow \setup(\lambda)$. 
The authority generates two prime numbers $p$ and $q$ according to security parameter $\lambda$.
It also picks $\mu \in \J(N)\setminus \QR(N)$ and  chooses a hash function
$\hash : \{0,1\}^* \longrightarrow \J(N)$.
The master public key is then $\mpk = \{N,\mu,\hash\}$ where $N=pq$
and the master secret key is $\msk = (p,q)$.

  \item $\skid \leftarrow \extract(\msk,\id)$. Given an identity $\id$, the authority
generates $R=\hash(\id)$. Since $R$ is in $\J(N)$ then either $R$ or
$\mu R$ belongs to $\QR(N)$. The authority chooses $a$ in $\{0,1\}$ such that
$\mu^a R$ belongs   to $\QR(N)$. It then picks at random one of the four possible square roots
of $\mu^a R$. We denote it by $\sqrt{\mu^a R}$.
The private key for identity $\id$ is then $\skid = (a,\sqrt{\mu^a R})$.

\item $\ssk \leftarrow \key(\id_1, \id_2)$. Party $P_1$ with identity $\id_1$
and system parameter $R_1 = \hash(\id_1), \sk_{\id_1}$
chooses two random values $s_1$ and $\alpha_1$ in $\Zx_N$ such that
\footnote{$P_1$ can easily find $\alpha_1 \in \Z$
such that $
\mu^{\alpha_1}R_1\in \QR$ and can even compute $\sqrt{\mu^{\alpha_1}R_1}$
from its private key $\sk_{\id_1} = (a_1,\sqrt{\mu^{a_1} R_1})$.
Indeed, $P_1$ chooses $\alpha_1$ to be equal to $2t+a_1$ for a random integer $t \in \Z$
so that 
\[
\mu^{\alpha_1}R_1 = \mu^{2t+a_1}R_1 =
\left(\mu^t \sqrt{\mu^a R_1}\right)^2.
\]}
\[
\mu^{\alpha_1}R_1\in \QR(N).
\]
$P_1$ then  sends
$(\id_1,\mu^{\alpha_1},S_1)$ to $P_2$ where $S_1 = s_1^2 \mod N$ and keeps secret
$(\alpha_1,s_1)$.
$P_2$  with identity $\id_2$
and system parameter $R_2 = \hash(\id_1), \sk_{\id_2}$ also performs the same procedure by
choosing two random values $s_2$ and $\alpha_2$ in $\Zx_N$ such that
\[
\mu^{\alpha_2}R_2  \in \QR(N).
\] 
Then $P_2$ sends $(\id_2,\mu^{\alpha_2},S_2)$ to $P_1$
with $S_2 =s_2^2 \mod N$, and keeps secret
$(\alpha_2,s_2)$.

\medskip

Each party $P_1$ and $P_2$ 
solves independently for each $i = 1,\dots,\ell$
the equation:
\begin{equation} \label{eq:EMS}
\Tun S_1^{2i+1} x_i^2
+
\Tde S_2^{2i+1} y_i^2
= 1 \mod N.
\end{equation}
From \emph{the} solution $(x_i,y_i)$ and its private key
$P_1$ is then able to compute the quantity $k_{i,1} \in \{ -1, 1\}$ where
\[
k_{i,1}= \left(\frac{1 + x_i  s_1^{2i+1} \sqrt{\Tun}}{N}\right).
\]
$P_2$ computes  $k_{i,2} \in \{ -1, 1\}$ from \emph{the} solution
$(x_i,y_i)$ and its private as the following:
 \[
  k_{i,2}=\left(\frac{2 + 2y_i s_2^{2i+1} \sqrt{\Tde} }{N}\right).
 \]
\end{enumerate}
By Lemma~\ref{lem:Jacobi_xy} we know that $k_{i,1} = k_{i,2}$ and therefore
the shared secret key $\ssk$ is 
\begin{equation*}
\Big(k_{1,1},k_{1,1},\dots, k_{\ell,1} \Big)
= 
\Big(k_{1,2}, k_{1,2}, \dots, k_{\ell,2} \Big)
=
\ssk.
\end{equation*}

\section{Cryptanalysis} \label{sec:attack}

\subsection{Impersonation Attack} 

\begin{figure}
\begin{center}
\includegraphics{./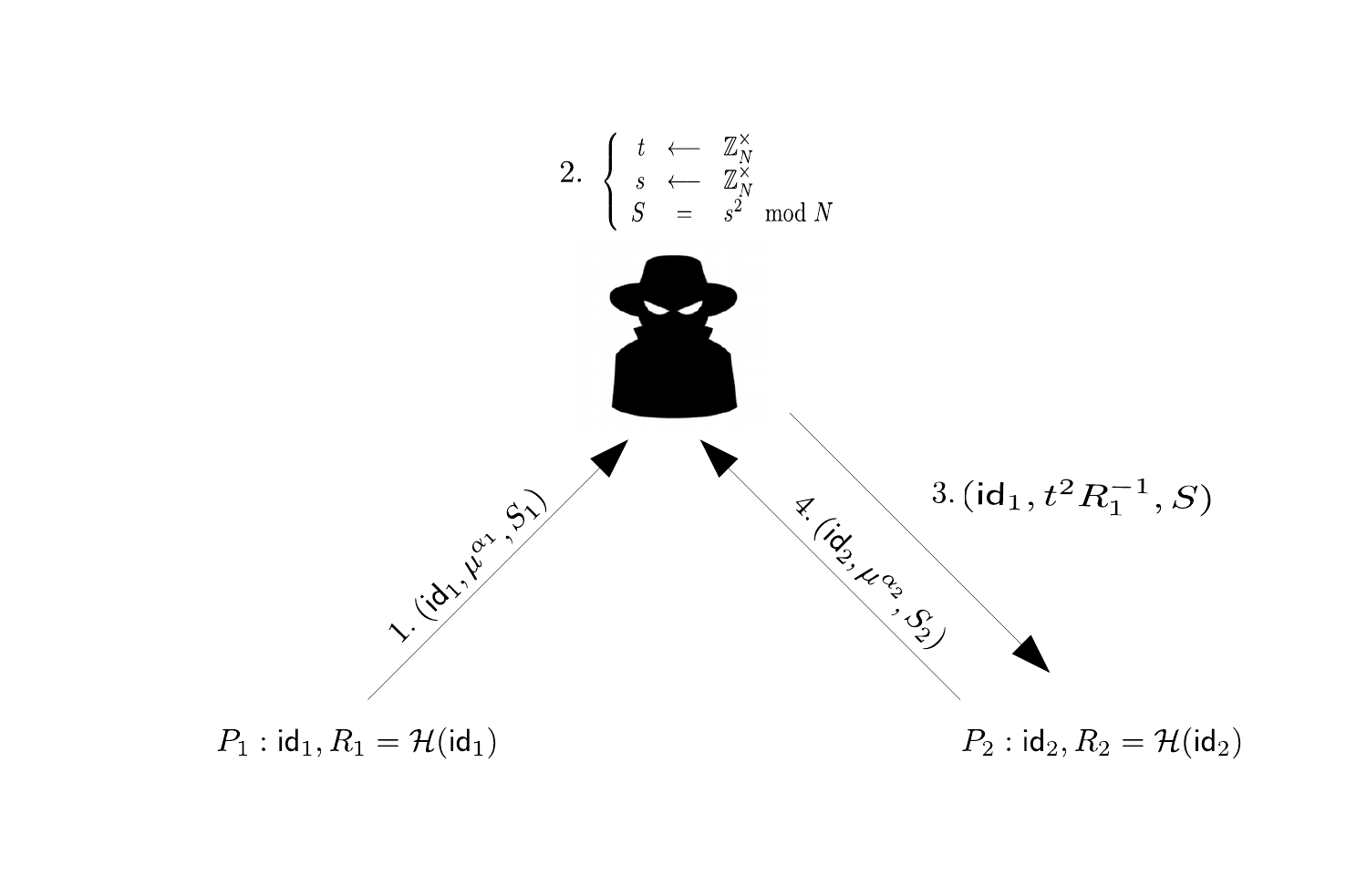}
\end{center}
\caption{Impersonation attack.} \label{fig:impersonation}
\end{figure}

The EMS protocol displays from its definition a major security flaw: it does not prevent from
parties to be impersonated by an adversary. 
The protocols does not ensure any authentication during the exchange. 
In the following we explain a simple man-in-the-middle attack. 
Let us assume that an adversary $\A$ receives and forwards data exchanged 
between $P_1$ and $P_2$ whose parameters are  respectively
$(R_1 = \hash(\id_1), \sk_{\id_1})$ and $(R_2 = \hash(\id_2), \sk_{\id_2})$. 
We will now show how $\A$ can easily impersonate $P_1$.

When $P_1$ sends its session identifier $(\id_1,\mu^{\alpha_1},S_1)$ to $P_2$,
$\A$ intercepts it and  chooses randomly $t$ and $s$ in $\Zx_N$, 
computes $S = s^2 \mod N$
then sends to $P_2$ the quantity
$\left(\id_1,\frac{t^2}{R_1},S \right)$.
$P_2$ also sends its session identifier $(\id_2,\mu^{\alpha_2},S_2)$ that is
intercepted by $\A$. Upon receiving $(\id_1,\frac{t^2}{R_1},S)$,
$P_2$ computes first $T = \frac{t^2}{R_1} R_1 \mod N$ which turns out to be 
$t^2 \mod N$.
Therefore $\A$ and $P_2$ have both to solve for $i = 1,\dots{},\ell$
the (common) equations:
\[
T S^{2i+1} x^2 + \Tde S_2^{2i+1} y^2 = 1 \mod N.
\]
Then $\A$ and $P_2$ share the same secret key $\ssk = (k_1,\dots{},k_\ell)$
since for any $i \ge 1$:
\[
k_{i}= \left(\frac{1 + x_i t s^{2i+1}}{N}\right)
=
\left(\frac{2 + 2 y_i \sqrt{\Tde} s_2^{2i+1}}{N}\right).
\]

The main reason why this attack is possible comes from the fact that each party in the
protocol perform computations without involving data that identify the correspondent.
Hence EMS protocol does not satisfy the basic property of authentication that 
any AKE protocol must satisfy.
In the next section, we analyze further the security of the protocol 
by showing that EMS  does not even ensure
the resiliency property \cite{EMS15}.

\subsection{Attack Against the Resiliency Property}

We assume that two parties $P_1$ and $P_2$ managed to share a secret key
$\ssk = (k_1,\dots{},k_{\ell})$ by means of EMS protocol as described in
Section~\ref{sec:EMS}. We will prove  that if an attacker $\A$ only knows
one bit, let us say $k_i$ with $i \in \{1,\dots{},\ell\}$, then $\A$ is able to
recompute any bit $k_j$ with $j \ne i$.   This proves that
EMS protocol does not satisfy the resiliency property unlike what is claimed
by the authors in \cite{EMS15}.
But before presenting our attack, we need an important lemma.

\begin{lemma}\cite[Lemma 5.1]{BGH07}\label{lem:2} 
Let $A$, $B_1$ and $B_2$ be elements from $\Z_N$, and for each $i$ 
 in $\{ 1, 2 \}$ let $(x_i,y_i)$ be a solution to
\[
Ax^2+B_i y^2 = 1 \mod N.
 \]
If  $A x_1 x_2 + 1$ belongs to $\Zx_N$ then $(x_3,y_3)$ 
with
$x_3 =\frac{x_1+x_2}{1 + A x_1 x_2}$
and
$y_3=\frac{y_1y_2}{1 + A x_ 1x_2}$
is solution to
\[
A x^2 + B_1 B_2 y^2 = 1 \mod N.
\]
\end{lemma}

We now describe how an adversary $\A$ can break the EMS protocol if $\A$ only knows
$k_i$ for some $i \in \{1,\dots{},\ell\}$ from a shared key $(k_1,\dots{},k_\ell)$.
For the sake of simplicity, we will only describe how $\A$ can recover $k_{i+1}$
from $k_i$ and data publicly exchanged by $P_1$ and $P_2$. By induction, the attack
can be generalized to any bit $k_j$.

\medskip

Firstly, $\A$ solves \eqref{eq:EMS} for $i$ and $i+1$  to get $(x_i,y_i)$ and $(x_{i+1},y_{i+1})$ such that:
\[
\left\{
\begin{array}{lclcl}
\Tun S_1^{2i+1} x_i^2 &+& \Tde S_2^{2i+1}y_i^2 &=& 1  \mod N\\
\\ \notag
\Tun S_1^{2i+3} x_{i+1}^2 &+& \Tde S_2^{2i+3}y_{i+1}^2 &=& 1 \mod N.
\end{array}
\right.
\]
As explained in Section~\ref{sec:SolvingQuad},
$\A$ gets the \emph{same} solutions to these equations as $P_1$ and $P_2$ would have
during the protocol. Furthermore, $\A$ knows $(S_1 x_{i+1},y_{i+1})$ which is
a solution to the following equation:
\begin{equation*}
\Tun S_1^{2i+1} \left(S_1x_{i+1}\right)^2 + \Tde S_2^{2i+3}y_{i+1}^2 = 1 \mod N.
\end{equation*}
From solutions $(x_i,y_i)$ and
$(S_1 x_{i+1},y_{i+1})$, the adversary $\A$, by using Lemma~\ref{lem:2},
derives $(x_*,y_*)$ that is solution to the equation
\begin{equation} \label{eq:x*y*}
\Tun S_1^{2i+1} x_*^2 + \Tde^2 S_2^{4i+4}y_*^2 = 1 \mod N
\end{equation}
where
\[
x_* = \frac{x_i + S_1 x_{i+1}}{1 + \Tun S_1^{2i+2} x_ix_{i+1}}
~~~
\text{and}
~~~
y_* = \frac{y_i y_{i+1}}{1 + \Tun S_1^{2i+2} x_ix_{i+1}}.
\]

The next lemma proves that $k_i$ and $k_{i+1}$ are related and an adversary
can easily compute $k_{i+1}$ from $k_i$ and $y_*$ and the public data exchanged
between $P_1$ and $P_2$.

\begin{lemma}
Let $\left(x_*,y_* \right)$ be \emph{the} solution to \eqref{eq:x*y*}. We then have the equality:
\begin{equation*}
k_{i+1}=
k_i \cdot
\left(\frac{1 + \Tun S_1^{2i+2} x_ix_{i+1}}{N}\right)\cdot\left(\frac{2 + 2y_* \Tde S_2^{2i+2}}{N}\right)
\end{equation*}
\end{lemma}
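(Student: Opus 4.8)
The plan is to exploit the multiplicativity of the Jacobi symbol together with the explicit combination formula of Lemma~\ref{lem:2}. Writing $A = \Tun S_1^{2i+1}$ for the common leading coefficient of the two equations, I first note that $A$ is a quadratic residue, being the product of the residue $\Tun$ and $S_1^{2i+1} = (s_1^{2i+1})^2$. Hence I may fix the square root $\sqrt{A} = s_1^{2i+1}\sqrt{\Tun}$, chosen compatibly with the root $\sqrt{\Tun}$ used by $P_1$ in the definitions of $k_i$ and $k_{i+1}$.

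The heart of the argument is a purely algebraic identity in $\Z_N$. For the three solutions of Lemma~\ref{lem:2} --- $(x_1,y_1)$ solving $Ax^2+B_1y^2=1$, $(x_2,y_2)$ solving $Ax^2+B_2y^2=1$, and the combined solution $(x_3,y_3)=\left(\frac{x_1+x_2}{1+Ax_1x_2},\frac{y_1y_2}{1+Ax_1x_2}\right)$ --- I would substitute the formula for $x_3$ and collect the terms in $1$ and in $\sqrt{A}$ to obtain
\[
\left(1 + A x_1 x_2\right)\left(1 + x_3\sqrt{A}\right) = \left(1 + x_1\sqrt{A}\right)\left(1 + x_2\sqrt{A}\right) \mod N .
\]
Applying the Jacobi symbol and using multiplicativity then gives
\[
\legendre{1 + A x_1 x_2}{N}\legendre{1 + x_3\sqrt{A}}{N}
=
\legendre{1 + x_1\sqrt{A}}{N}\legendre{1 + x_2\sqrt{A}}{N}.
\]

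Next I specialize to the attack data, taking $(x_1,y_1)=(x_i,y_i)$, $(x_2,y_2)=(S_1 x_{i+1},y_{i+1})$, and $(x_3,y_3)=(x_*,y_*)$, so that $A x_1 x_2 = \Tun S_1^{2i+2} x_i x_{i+1}$. Three identifications finish the proof. First, $\legendre{1 + x_i\sqrt{A}}{N} = k_i$ by definition. Second, since $S_1 s_1^{2i+1} = s_1^{2i+3}$, the factor $\legendre{1 + S_1 x_{i+1}\sqrt{A}}{N}$ equals $\legendre{1 + x_{i+1}s_1^{2i+3}\sqrt{\Tun}}{N} = k_{i+1}$. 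Third, applying Lemma~\ref{lem:Jacobi_xy} to equation~\eqref{eq:x*y*}, whose second coefficient $\Tde^2 S_2^{4i+4} = (\Tde S_2^{2i+2})^2$ is a perfect square and hence a residue, yields $\legendre{1 + x_*\sqrt{A}}{N} = \legendre{2 + 2y_*\Tde S_2^{2i+2}}{N}$. Substituting these into the displayed Jacobi identity gives $\legendre{1+\Tun S_1^{2i+2}x_ix_{i+1}}{N}\legendre{2+2y_*\Tde S_2^{2i+2}}{N} = k_i\,k_{i+1}$, and multiplying through by $k_i$ with $k_i^2 = 1$ produces exactly the claimed formula.

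The main technical point to watch is the consistent choice of square roots modulo $N$: the identity of the second paragraph is valid for any fixed square root of $A$, but I must ensure that the same root $\sqrt{\Tun}$ appears in $k_i$, in $k_{i+1}$, and in the application of Lemma~\ref{lem:Jacobi_xy} to $(x_*,y_*)$. This is guaranteed because all the $k_{\bullet}$ are defined through the single root fixed by $P_1$. A minor side condition is the invertibility of $1 + \Tun S_1^{2i+2}x_ix_{i+1}$ modulo $N$ demanded by Lemma~\ref{lem:2}; this holds generically and is precisely the hypothesis under which $(x_*,y_*)$ is well defined.
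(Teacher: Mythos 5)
Your proposal is correct and follows essentially the same route as the paper's own proof: the identity $\left(1+Ax_1x_2\right)\left(1+x_3\sqrt{A}\right)=\left(1+x_1\sqrt{A}\right)\left(1+x_2\sqrt{A}\right)$ is exactly the paper's factorization of $1+x_*\sqrt{\Tun S_1^{2i+1}}$ written in cleared-denominator form, and the remaining steps (multiplicativity of the Jacobi symbol, the identifications with $k_i$ and $k_{i+1}$ via $S_1 s_1^{2i+1}=s_1^{2i+3}$, and the application of Lemma~\ref{lem:Jacobi_xy} to \eqref{eq:x*y*}) coincide with the paper's. Your added remarks on the consistency of the choice of square roots and on the invertibility of $1+\Tun S_1^{2i+2}x_ix_{i+1}$ are correct and slightly more careful than the original.
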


\begin{proof}
We start by observing that
\begin{eqnarray*}
1 + x_* \sqrt{\Tun S_1^{2i+1}}
&=&  1 +  x_* \sqrt{\Tun} s_1^{2i+1} \\
&=& 1 + \frac{\left(x_i + S_1 x_{i+1}\right)}{1 + \Tun S_1^{2i+2} x_i x_{i+1}} \sqrt{\Tun} s_1^{2i+1}\\
&=& \frac{1 + \Tun S_1^{2i+2} x_i x_{i+1} + x_i \sqrt{\Tun} s_1^{2i+1}
+ x_{i+1} \sqrt{\Tun} s_1^{2i+3}}
{1 + \Tun S_1^{2i+2} x_ix_{i+1}}\\
&=& \frac{\left(1 + x_i \sqrt{\Tun} s_1^{2i+1} \right)
\left(1 + x_{i+1} \sqrt{\Tun} s_1^{2i+3} \right)}{1 + \Tun S_1^{2i+2} x_ix_{i+1}}
\end{eqnarray*}
Since $k_i = \left(\frac{1 + x_i \sqrt{\Tun} s_1^{2i+1}}{N}\right)$ and
$k_{i+1} = \left(\frac{1 + x_{i+1} \sqrt{\Tun} s_1^{2i+3}}{N}\right)$,
this implies in particular that:
\[
k_{i+1} =
k_i\cdot
\left(\frac{1 + \Tun S_1^{2i+2} x_ix_{i+1}}{N}\right)\cdot\left(\frac{1 + x_* \sqrt{\Tun S_1^{2i+1}}}{N}\right)
\]
Since $(x_*,y_*)$ is solution to \eqref{eq:x*y*} 
and $\Tun S_1^{2i+1} \in \QR(N)$
then 
by Lemma~\ref{lem:Jacobi_xy} we also have that:

\begin{equation*}
\left(\frac{2 + 2y_* \Tde S_2^{2i+2}}{N}\right)
=
\left(\frac{1 + x_* \sqrt{\Tun S_1^{2i+1}}}{N}\right)
\end{equation*}
which terminates the proof of the lemma.
\quad \end{proof}

This attack is as efficient as the scheme since it only requires to compute the Jacobi
symbol and the solving of \eqref{eq:BGH}.
The computation of the Jacobi symbol can be performed \cite{vonGG13} 
in $O\left( \log N \mathsf{M}\left( \log N\right) \right)$ 
operations where $\mathsf{M}(\lambda)$ is the cost of the multiplication 
of two integers of 
size $\lambda$ bits (for large integers 
$\mathsf{M}(\lambda) = \lambda \log \lambda \log \log \lambda$). 
The equation \eqref{eq:BGH} can be solved with $O\left(\log^4 N \right)$ operations (\cite{BGH07}). The total cost of the attack is therefore 
$O\left(\log^4 N \right)$ operations.

\section{Discussion on a reparation} \label{sec:discussion}

Our work raises also the question of whether the EMS protocol can be repaired. 
Our attack exploits the fact that the shared secret bits are related (see Lemma 3).
One possible reparation would be to generate $\ell$ \emph{independent} 
identity values $R_{i,1},\dots{},R_{i,\ell}$ for a party $P_i$. 
For instance, one solution is to set $R_j = \hash(\id,j)$ 
for all $j \in \{1,\dots{},\ell\}$, and the authority creates the secret key as 
$\sk = ( (a_j)_{1\leq j \leq \ell}, (r_j)_{1\leq j \leq \ell} )$ with 
$r_j = \sqrt{\mu^{a_j}R_j}$.
Next, when two parties $P_1$ and $P_2$ wish to authenticate, 
they just have to solve 
the equations:
\begin{equation}
R_{1,i} x_{i,j}^2 + R_{2,j} y_{i,j}^2 = 1 \mod N.
\end{equation}
The secret shared bit associated to this equation is now 
$k_{i,j} = 
\left(\frac{1 + x_{i,j} \sqrt{R_{1,i}}}{N}\right)
=
\left(\frac{2 + 2y_{i,j} \sqrt{R_{2,j}}}{N}\right).
$
Hence, they do not require anymore the values $\mu^{\alpha_1}$ and 
$\mu^{\alpha_2}$ which 
introduced the weaknesses (in particular for the impersonation attack).
Unfortunately, this new protocol can (only) produce $\ell^2$ secret bits, 
and the secret and public keys becomes $\ell$ times larger, which leads to 
an inefficient protocol for realistic applications.

\section{Conclusion} \label{sec:conclusion}

In this paper, we have studied the security of the IBAKE protocol in introduced in \cite{EMS15}.
The authors claimed that their protocol is provably secure when during the key exchange session 
some secret bits are leaked.

\medskip

We showed that this protocol has two major weaknesses. First, it is vulnerable to a simple 
man-in-the-middle attack.  Secondly, we propose an efficient attack where an adversary can 
easily compute any bit of a shared key if just one secret bit is known, which 
contradicts authors' claim.

\section*{Acknowledgments}

A. Otmani is funded by ANR grant ANR-15-CE39-0013-01 MANTA.

\bibliographystyle{plain}

\begin{thebibliography}{100}
\bibitem{BX04}
Dan Boneh and Xavier Boyen.
\newblock Efficient selective-id secure identity-based encryption without
  random oracles.
\newblock In Christian Cachin and Jan~L. Camenisch, editors, {\em Advances in
  Cryptology - EUROCRYPT 2004: International Conference on the Theory and
  Applications of Cryptographic Techniques, Interlaken, Switzerland, May 2-6,
  2004. Proceedings}, pages 223--238, Berlin, Heidelberg, 2004. Springer Berlin
  Heidelberg.

\bibitem{BF01}
Dan Boneh and Matt Franklin.
\newblock Identity-based encryption from the weil pairing.
\newblock In Joe Kilian, editor, {\em Advances in Cryptology --- CRYPTO 2001:
  21st Annual International Cryptology Conference, Santa Barbara, California,
  USA, August 19--23, 2001 Proceedings}, pages 213--229, Berlin, Heidelberg,
  2001. Springer Berlin Heidelberg.

\bibitem{BGH07}
Dan Boneh, Craig Gentry, and Michael Hamburg.
\newblock Space-efficient identity based encryption without pairings.
\newblock {\em 2007 48th Annual IEEE Symposium on Foundations of Computer
  Science}, 00:647--657, 2007.

\bibitem{C01}
Clifford Cocks.
\newblock An identity based encryption scheme based on quadratic residues.
\newblock In {\em Proceedings of the 8th IMA International Conference on
  Cryptography and Coding}, pages 360--363, London, UK, UK, 2001.
  Springer-Verlag.

\bibitem{CR03}
John Cremona and David Rusin.
\newblock Efficient solution of rational conics.
\newblock {\em Mathematics of Computation}, 72(243):1417--1441, 2003.

\bibitem{DH76}
W.~Diffie and M.~Hellman.
\newblock New directions in cryptography.
\newblock {\em IEEE Trans. Inf. Theor.}, 22(6):644--654, September 1976.

\bibitem{EMS15}
Ibrahim Elashry, Yi~Mu, and Willy Susilo.
\newblock A resilient identity-based authenticated key exchange protocol.
\newblock {\em Security and Communication Networks}, 8(13):2279--2290, 2015.
\newblock sec.1172.

\bibitem{GPV08}
Craig Gentry, Chris Peikert, and Vinod Vaikuntanathan.
\newblock Trapdoors for hard lattices and new cryptographic constructions.
\newblock In {\em Proceedings of the Fortieth Annual ACM Symposium on Theory of
  Computing}, STOC '08, pages 197--206, New York, NY, USA, 2008. ACM.

\bibitem{JB09}
Mahabir~Prasad Jhanwar and Rana Barua.
\newblock A variant of boneh-gentry-hamburg's pairing-free identity based
  encryption scheme.
\newblock In Moti Yung, Peng Liu, and Dongdai Lin, editors, {\em Information
  Security and Cryptology: 4th International Conference, Inscrypt 2008,
  Beijing, China, December 14-17, 2008, Revised Selected Papers}, pages
  314--331, Berlin, Heidelberg, 2009. Springer Berlin Heidelberg.

\bibitem{J04}
Antoine Joux.
\newblock A one round protocol for tripartite diffie--hellman.
\newblock {\em Journal of Cryptology}, 17(4):263--276, 2004.

\bibitem{LI11}
Xiong Li, Jian-Wei Niu, Jian Ma, Wen-Dong Wang, and Cheng-Lian Liu.
\newblock Cryptanalysis and improvement of a biometrics-based remote user
  authentication scheme using smart cards.
\newblock {\em Journal of Network and Computer Applications}, 34(1):73 -- 79,
  2011.

\bibitem{LI13}
Xiong Li, Jianwei Niu, Muhammad~Khurram Khan, and Junguo Liao.
\newblock An enhanced smart card based remote user password authentication
  scheme.
\newblock {\em Journal of Network and Computer Applications}, 36(5):1365 --
  1371, 2013.

\bibitem{MB05}
Noel McCullagh and Paulo S. L.~M. Barreto.
\newblock A new two-party identity-based authenticated key agreement.
\newblock In Alfred Menezes, editor, {\em Topics in Cryptology -- CT-RSA 2005:
  The Cryptographers' Track at the RSA Conference 2005, San Francisco, CA, USA,
  February 14-18, 2005. Proceedings}, pages 262--274, Berlin, Heidelberg, 2005.
  Springer Berlin Heidelberg.

\bibitem{S85}
Adi Shamir.
\newblock Identity-based cryptosystems and signature schemes.
\newblock In {\em Proceedings of CRYPTO 84 on Advances in Cryptology}, volume
  196, pages 47--53, New York, NY, USA, 1985. Springer-Verlag New York, Inc.

\bibitem{S01}
N.~P. Smart.
\newblock An identity based authenticated key agreement protocol based on the
  weil pairing.
\newblock {\em Electronics Letters}, 38:630--632, 2001.

\bibitem{TITN16}
Ferucio~Lauren{\c{T}}iu {\c{T}}iplea, Sorin Iftene, George Te{\c{s}}eleanu, and
  Anca-Maria Nica.
\newblock Security of identity-based encryption schemes from quadratic
  residues.
\newblock In Ion Bica and Reza Reyhanitabar, editors, {\em Innovative Security
  Solutions for Information Technology and Communications: 9th International
  Conference, SECITC 2016, Bucharest, Romania, June 9-10, 2016, Revised
  Selected Papers}, pages 63--77, Cham, 2016. Springer International
  Publishing.

\bibitem{vonGG13}
J.~von~zur Gathen and J.~Gerhard.
\newblock {\em Modern Computer Algebra}.
\newblock Cambridge University Press, 2013.

\bibitem{W05}
Brent Waters.
\newblock Efficient identity-based encryption without random oracles.
\newblock In Ronald Cramer, editor, {\em Advances in Cryptology -- EUROCRYPT
  2005: 24th Annual International Conference on the Theory and Applications of
  Cryptographic Techniques, Aarhus, Denmark, May 22-26, 2005. Proceedings},
  pages 114--127, Berlin, Heidelberg, 2005. Springer Berlin Heidelberg.
\end{thebibliography}

\end{document}